\documentclass[letterpaper, 10 pt, conference]{ieeeconf} % Comment this line out if you need a4paper

\IEEEoverridecommandlockouts                              % This command is only needed if 
\usepackage{graphics} % for pdf, bitmapped graphics files
\usepackage{epsfig} % for postscript graphics files
\usepackage{amsmath} % assumes amsmath package installed
\usepackage{amssymb}  % assumes amsmath package installed
\usepackage{xcolor}

\usepackage{booktabs}
\usepackage{multirow}
\usepackage{url}

\newtheorem{cond}{Condition}
\newtheorem{theorem}{Theorem}

\title{\LARGE \bf
Online learning of neural state-space models
}

\author{
Bendegúz Györök, Tamás Péni, Maarten Schoukens, and Roland Tóth% <-this % stops a space
\thanks{This work was funded by the Air Force Office of Scientific Research under award number FA8655-23-1-7061 and by the European Union (ERC, COMPLETE, 101075836). Views and opinions expressed are however those of the authors only and do not necessarily reflect those of the European Union or the European Research Council Executive Agency. Neither the European Union nor the granting authority can be held responsible for them.}% <-this % stops a space
\thanks{B. Györök, T. Péni, and R. Tóth are with the Systems and Control Laboratory, HUN-REN Institute for Computer Science and Control, 1111 Budapest, Hungary (e-mail: gyorokbende@sztaki.hu; peni@sztaki.hu).}%
\thanks{M. Schoukens and R. Tóth are with the Control Systems Group, Eindhoven University of Technology, 5612 Eindhoven, The Netherlands (e-mail: m.schoukens@tue.nl; r.toth@tue.nl).}%
}

\begin{document}

\maketitle
\thispagestyle{empty}
\pagestyle{empty}

%%%%%%%%%%%%%%%%%%%%%%%%%%%%%%%%%%%%%%%%%%%%%%%%%%%%%%%%%%%%%%%%%%%%%%%%%%%%%%%%
\begin{abstract}
Recent advances in deep-learning-based nonlinear system identification have led to encoder-based estimation of \emph{neural state-space} (ANN-SS) models that achieve state-of-the-art performance in offline settings by estimating initial model states from past input-output data. These methods are typically used in multiple-shooting-based offline identification, and online learning of these models remains largely unexplored. This paper presents a batch-wise learning pipeline and a direct recursive identification algorithm for subspace encoder-based ANN-SS models. We provide convergence analysis of the recursive formulation and validate its performance through extensive simulation studies. The results demonstrate that the proposed approach enables computationally efficient online adaptation with high model accuracy.
\end{abstract}

%%%%%%%%%%%%%%%%%%%%%%%%%%%%%%%%%%%%%%%%%%%%%%%%%%%%%%%%%%%%%%%%%%%%%%%%%%%%%%%%
\section{Introduction}
Continuously rising performance requirements and the growing complexity of engineering systems have resulted in a high demand for accurate nonlinear models. For example, reliable path planning and motion control of autonomous vehicles require accurate personalized dynamic models. However, first-principles models often only provide an approximate system description, typically neglecting complex physical phenomena such as aerodynamic effects, friction characteristics, etc. Therefore, it is attractive to use system identification approaches to obtain accurate models directly from measurement data. Recent data-driven modeling methods applying \emph{artificial neural networks} (ANNs), in particular ANN-SS methods, offer superior model performance over conventional nonlinear identification approaches~\cite{ljung_deep_2020}. These models can be estimated with single-shooting~\cite{bemporad_l-bfgs-b_2025} and multiple-shooting strategies~\cite{forgione_continuous-time_2021}. In the latter setting, state-of-the-art approaches, such as~\cite{beintema_deep_2023}, utilize a so-called encoder network to learn the reconstructability map of the model and estimate the initial states from past \emph{input-output} (IO) data.

While offline learning of ANN-SS models often yields accurate and reliable identification of the underlying system, their predictive performance may degrade over time. Variations in operating conditions due to, e.g., component wear, changing loads, or environmental influences can introduce discrepancies between the identified model and the actual system. Furthermore, in many practical applications, system identification is performed using data from a single representative system, while the resulting model is expected to generalize across an entire class of related systems that might significantly vary due to manufacturing tolerances. In addition, acquiring sufficiently informative datasets through dedicated experimental campaigns can be infeasible in practice. %in domains such as aerospace engineering~\cite{kumar_adaptive_2023}.
In these scenarios, online learning of system dynamics is particularly attractive, as it enables model adaptation using measurement data collected during system operation.

Despite the success of ANN-SS models, online learning and recursive formulations for these architectures remain largely underexplored. Available recursive state-space estimation techniques mostly apply linear model structures~\cite{cho_fast_1994} or linear-in-the-parameters formulations~\cite{wigren_recursive_2023}, while existing recursive ANN-SS identification methods typically rely on nonlinear extensions of the Kalman filter, such as the \emph{Extended Kalman Filter} (EKF), to jointly estimate model states and parameters~\cite{frascati_online_2026}. However, due to its large computational burden, EKF-based approaches are typically not feasible for large-scale ANN parametrizations. In \cite{forgione_adaptation_2023}, a Jacobian-based parameter adaptation scheme is combined with a state initialization method that uses past IO data. While conceptually similar to encoder-based methods, this approach assumes that the predicted output is included in the model state and does not provide an explicit characterization of the learned state reconstruction map, such as an encoder network. More recently, meta-learning architectures developed for system identification have shown strong adaptation capabilities~\cite{chakrabarty_meta-learning_2023,piga_adaptation_2024}. However, the substantial training overhead associated with meta-models limits their applicability in scenarios requiring rapid adaptation to changing operating conditions.

Motivated by these limitations, this paper extends classical recursive identification concepts~\cite{ljung_theory_1987} for ANN-SS models, utilizing an encoder network for state estimation. The resulting approach is similar in terms of model structure to the offline SUBNET method~\cite{beintema_deep_2023}, which uses a state-of-the-art encoder-based ANN-SS model. %These classical formulations typically rely on linear model structures to maintain computational tractability.
To enable efficient recursive learning in the nonlinear setting, we build upon the recently proposed identification pipeline of~\cite{bemporad_l-bfgs-b_2025}, which leverages \texttt{JAX}, a highly efficient automatic differentiation library.

The main contributions of this paper are:
\begin{enumerate}
    \item An online recursive identification scheme for ANN-SS models utilizing an encoder for initial state estimation.
    \item Convergence proof of the estimated parameters using the proposed recursive scheme.
    \item An alternative batch-wise online learning pipeline that combines a \texttt{JAX}-based identification approach with the encoder-based ANN-SS model formulation.
    \item Demonstration of the effectiveness of the proposed method through simulation studies.
\end{enumerate}

This paper is organized as follows: Sect.~\ref{sec:problem_statement} introduces the problem formulation with the data-generating system and data-collection methods. Then, Sect.~\ref{sec:continual_learning} provides a batch-wise learning pipeline for ANN-SS models that is suitable for online learning of the system dynamics. In Sect.~\ref{sec:recursive_learning}, a recursive identification approach is presented for the considered model class with theoretical analysis regarding the convergence of the method. Then, Sect.~\ref{sec:sim_study} demonstrates the efficiency of the proposed approach with an extensive simulation study. %, while Section~\ref{sec:real_world_example} shows a real-world example.
Finally, Sect.~\ref{sec:conclusions} gives the overall conclusions on the proposed method.

\section{Problem setting}\label{sec:problem_statement}
\subsection{System description}
Consider that the dynamics of the data-generating system are represented by the following \emph{discrete-time} (DT) nonlinear \emph{state-space} (SS) form:
% \begin{subequations}
% \label{eq:data-gen-sys}
% \begin{align}
%     x_{k+1}&=f\left(x_k, u_k\right),\\
%     y_k&=h\left(x_k, u_k\right) + e_k,\label{eq:DT-y}
% \end{align}
% \end{subequations}
\begin{equation}
\label{eq:data-gen-sys}
    x_{k+1}=f\left(x_k, u_k\right),\quad
    y_k=h\left(x_k, u_k\right) + e_k,
\end{equation}
where $k \in \mathbb{Z}$ is the discrete time index, $x_k\in\mathbb{R}^{n_\mathrm{x}}$ is the state, $u_k\in\mathbb{R}^{n_\mathrm{u}}$ is the control input, $y_k\in\mathbb{R}^{n_\mathrm{y}}$ is the measured output of the system, while $f:\mathbb{R}^{n_\mathrm{x}} \times \mathbb{R}^{n_\mathrm{u}} \rightarrow \mathbb{R}^{n_\mathrm{x}}$ and $h:\mathbb{R}^{n_\mathrm{x}} \rightarrow \mathbb{R}^{n_\mathrm{y}}$ are possibly nonlinear functions. In \eqref{eq:data-gen-sys}, $e_k$ represents the measurement noise generated by an i.i.d. white noise process with finite variance. %Furthermore, the properties of system \eqref{eq:data-gen-sys} may be time-varying, i.e., $f$ and $h$ can change in time.

\subsection{Data collection}
We consider that measured \emph{input-output} (IO) data arrives in batches of fixed length $N\geq 1$ with no overlap between the batches. The $i$\textsuperscript{th} batch arriving at time instance $k=k_\mathrm{end}$ is defined as $\mathcal{B}_i = \{(u_k, y_k)\}_{k=k_\mathrm{start}}^{k_\mathrm{end}}$ for an arbitrary start index $k_\mathrm{start}\in\mathbb{Z}$ such that $k_\mathrm{end}=k_\mathrm{start}+N-1$. The complete dataset accumulated up to batch $i$ is $\mathcal{D}_i = \mathcal{B}_i \cup \mathcal{B}_{i-1} \cup \dots \cup \mathcal{B}_0$. In practice, retraining on the full $\mathcal{D}_i$ is
computationally expensive. We therefore introduce a
\emph{replay buffer} $\mathcal{M}_i$ of fixed capacity $m \in \mathbb{Z}_{>0}$,
defined as the sliding window of the $m$ most recent batches. More formally, 
% \begin{equation}
%     \mathcal{M}_i = \bigcup_{j=\max\{0,\,i-m+1\}}^{i} \mathcal{B}_j,
%     \quad |\mathcal{M}_i| = \min\{i+1, m\} \cdot N.
%     \label{eq:buffer}
% \end{equation}
$\mathcal{M}_i = \mathcal{D}_i$ for all $i < m$, i.e., the buffer coincides with the full dataset until it reaches
capacity. For $i \geq m$, batch $\mathcal{B}_{i-m}$ is deleted
and the buffer size is fixed at $mN$ samples. Note that this setting also involves the special cases when only the recent batch is available, i.e., $m=1$; furthermore, when only one sample arrives at a time, i.e., $N=1$. In the latter scenario, the predictor maps the past IO data to the current model output. To also include a state propagation step, $N\geq 2$ is needed.

\subsection{Learning problem}
Using the available data set $\mathcal{M}_i$, we aim to identify models that characterize \eqref{eq:data-gen-sys} in the form of
\begin{subequations}
\label{eq:ANN-SS}
% \begin{align}
%     \hat{x}_{k+1}&=f_\theta(\hat{x}_k, u_k),\label{eq:ANN_f}\\
%     \hat{y}_k&=h_\theta(\hat{x}_k, u_k),\label{eq:ANN_h}
% \end{align}
\begin{equation}\label{eq:ANN_fh}
    \hat{x}_{k+1}=f_\theta(\hat{x}_k, u_k),\quad
    \hat{y}_k=h_\theta(\hat{x}_k, u_k),
\end{equation}
where $\hat{x}_k\in\mathbb{R}^{n_{\hat{\mathrm{x}}}}$ is the model state, $\hat{y}_k\in\mathbb{R}^{n_\mathrm{y}}$ is the model output, while $f_\theta : \mathbb{R}^{n_{\hat{\mathrm{x}}}} \times \mathbb{R}^{n_\mathrm{u}}\rightarrow \mathbb{R}^{n_{\hat{\mathrm{x}}}}$ and $h_\theta : \mathbb{R}^{n_{\hat{\mathrm{x}}}} \times \mathbb{R}^{n_\mathrm{u}}\rightarrow \mathbb{R}^{n_\mathrm{y}}$ are the state transition function and output map of the model, respectively. Both of these functions are parametrized as simple, feedforward neural networks. Under these conditions, \eqref{eq:ANN_fh} is referred to as a \emph{neural state-space} model. We also consider that an encoder network, capable of estimating the initial states from past IO data, is co-estimated with the model parameters in the form of
\begin{equation}
    \hat{x}_k = \Psi_\theta(u_{k-n}^{k-1}, y_{k-n}^{k-1}),\label{eq:ANN_enc}
\end{equation}
\end{subequations}
where $u_{k-n}^{k-1}=[u_{k-n}^\top, \dots, u_{k-1}^\top]^\top$, and $y_{k-n}^{k-1}$ is defined similarly, with $n\in\mathbb{Z}_+$ being the encoder lag, i.e., the length of the state initialization window. The existence of this encoder function relates to the concept of reconstructability, compare \cite{beintema_deep_2023} for the full derivation. Similarly to $f_\theta$ and $h_\theta$, the encoder $\Psi_\theta$ is also parametrized as a simple feedforward ANN (with a residual layer), with $\theta\in\mathbb{R}^{n_\theta}$ collecting the parameters (i.e., the weights and biases) of all three networks.

%, which has a well-established literature on nonlinear system identification; see, e.g., \cite{suykens_nonlinear_1995,schoukens_improved_2021}.
The objective of this work is to develop an online learning method that estimates and continuously updates the model parameters of \eqref{eq:ANN-SS} based on available data sequence $\mathcal{M}_i$.%An important requirement is computational efficiency, such that it can be utilized for online learning of nonlinear dynamics. %Specifically, we require that training using $\mathcal{M}_i$ should finish before the next batch $\mathcal{B}_{i+1}$ arrives, making it possible that data transition and model updates can happen at the same time.

\section{Batch-wise model learning}\label{sec:continual_learning}
\subsection{Objective function}
Offline ANN-SS identification methods, under the \emph{output error}~(OE) noise structure, typically estimate $\theta$ by minimizing the simulation loss. In the current setting, this could be achieved by forward simulating~\eqref{eq:ANN-SS} on the concatenated sequences from the replay buffer $\mathcal{M}_i$. However, this becomes computationally demanding for $mN\gg 1$ and limits opportunities for parallelization. %Another problem is that the objective function could become inextricable in certain regions of the parameter space corresponding to unstable models \cite{ribeiro_smoothness_2020}. %Gradient descent-based optimizers can venture into these regions, causing exploding gradients and diverging optimization.
Instead, we evaluate the model independently on each batch. With this modification, the resulting objective function resembles the truncated simulation error loss, see, e.g., \cite{forgione_continuous-time_2021}, and improves computational efficiency while yielding a more favorable optimization landscape~\cite{beintema_deep_2023}. %Simulation over the batches can be carried out in parallel, %promoting computational efficiency. ,
%and it results in a smoother cost function compared to the full simulation error objective (see \cite{ribeiro_smoothness_2020}).
An important aspect of this formulation is that the initial state corresponding to each batch in the replay buffer has to be estimated. %One approach is to fix all initial states, e.g., as zeros. Assuming system \eqref{eq:data-gen-sys} is a fading memory system, the effect of the initial state $x_0$ is negligible on $x_k$ after a sufficiently long horizon $k\geq k_0$. Discarding the first $k_0$ simulated output values of each batch would reduce the effect of the transient error; however, characterizing the $k_0$ horizon length might be challenging, and discarding possibly large amounts of data is not suitable for the online learning setting. Alternatively, the initial state of each batch in $\mathcal{M}_i$ can be optimized along with $\theta$. The drawback of this method is that with large $m$ and $\hat{n}_\mathrm{x}$ dimensions, a significant number of new optimization variables would be introduced. 
For this reason, we use the encoder network in~\eqref{eq:ANN_enc} to estimate the initial states from the last $n$ IO samples of the preceding batch. %to estimate the initial state associated with the current batch with . %The existence of this encoder function relates to the concept of reconstructability, compare \cite{beintema_deep_2023} for the full derivation.
The resulting cost function is
% \begin{subequations}
% \begin{align}\label{eq:cost_fun}
%     V_{\mathcal{M}_i}(\theta) &= \frac{1}{\vert\mathcal{M}_i\vert}\sum_{j\in\mathcal{I}_{i}} \sum_{k=j}^{j+N-1} \|y_{k} - \hat{y}_{k\vert j}\|_2^2,\\
%     \hat{x}_{j\vert j} &= \Psi_\theta(u_{j-n}^{j-1}, y_{j-n}^{j-1}),\\
%     \hat{x}_{k+1\vert j} &= f_\theta(\hat{x}_{k\vert j}, u_{k}),\\
%     \hat{y}_{k\vert j} &= h_\theta(\hat{x}_{k\vert j}, u_{k}),
% \end{align}
% \end{subequations}
\begin{subequations}\label{eq:cost_fun}
\begin{align}
    V_{\mathcal{M}_i}(\theta) &= \frac{1}{\vert\mathcal{M}_i\vert}\sum_{j\in\mathcal{I}_{i}} \sum_{k=j}^{j+N-1} \|y_{k} - \hat{y}_{k\vert j}\|_2^2,\\
        \hat{x}_{k+1\vert j} &= f_\theta(\hat{x}_{k\vert j}, u_{k}),\quad
    \hat{y}_{k\vert j} = h_\theta(\hat{x}_{k\vert j}, u_{k}),\\
    \hat{x}_{j\vert j} &= \Psi_\theta(u_{j-n}^{j-1}, y_{j-n}^{j-1}),
\end{align}
\end{subequations}
where $\vert\mathcal{M}_i\vert=\min\{i+1, m\}N$, while  $\mathcal{I}_{i}$ collects the starting times of the batches in the replay buffer. Furthermore, the pipe notation ($\vert$) is introduced such that $\hat{x}_{k\vert j}$ and $\hat{y}_{k\vert j}$ (with $k\geq j$) denote the simulated state and output at time $k$ from the initial state $\hat{x}_{j\vert j}$. Under model structure \eqref{eq:ANN-SS} and objective function \eqref{eq:cost_fun}, the proposed formulation can be viewed as the SUBNET method (see \cite{beintema_deep_2023}) modified for online learning. Furthermore, even though the current paper considers only ANN-SS models, the proposed learning pipeline can be extended to the entire SUBNET-based model family, e.g., for \emph{linear parameter-varying}~(LPV) models~\cite{verhoek_deep-learning-based_2022}, Koopman models~\cite{iacob_deep_2021}, model augmentation structures~\cite{hoekstra_learning-based_2026}, etc.

\subsection{Parameter estimation}
The model estimate $\hat{\theta}_i$ is obtained based on the dataset $\mathcal{M}_i$ by minimizing \eqref{eq:cost_fun} with a gradient-based optimizer, using $\hat{\theta}_{i-1}$ to initialize the model parameters. The data collection and training processes are illustrated in Fig.~\ref{fig:data_collection}. Due to time constraints posed by the online learning problem, iterations cannot be continued until convergence, but only for a fixed number of iterations. Hence, the selection of an efficient optimizer is essential. For this purpose, we adopt a similar identification pipeline of \cite{bemporad_l-bfgs-b_2025}, utilizing \texttt{JAX}, a highly efficient automatic differentiation library. We perform a fixed number of Adam steps \cite{kingma_adam_2015} followed by a few L-BFGS~\cite{liu_limited_1989} iterations. Adam, an enhanced variant of the classical gradient-descent algorithm, offers a computationally efficient algorithm, used for warm-starting the L-BFGS method. On the other hand, the L-BFGS optimizer, as a quasi-Newton approach, is computationally more demanding, but offers rapid convergence compared to simple gradient descent-based methods. A careful combination of these approaches has recently emerged as a promising direction in offline system identification, resulting in rapid model learning across a variety of applications, see \cite{bemporad_l-bfgs-b_2025,bemporad_efficient_2025,gyorok_data-driven_2026}.

\begin{figure}
    \centering
    \includegraphics[width=0.95\linewidth]{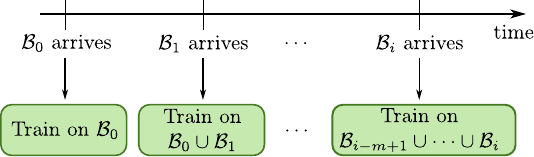}
    \caption{Illustration of the training phases for batch-wise learning.}\vspace{-18pt}
    \label{fig:data_collection}
\end{figure}

Several implementation choices were introduced to meet the runtime requirements of online identification. The pipeline in~\cite{bemporad_l-bfgs-b_2025} employs the L-BFGS-B optimizer; instead, we opted for the L-BFGS algorithm, as the absence of box constraints enables a more efficient implementation. Second, before the replay buffer is fully populated ($i<m$), unused entries are filled with zeros and excluded from the cost function via masking. This keeps the optimization variables and computational graph fixed throughout operation, enabling efficient batched simulations using \texttt{JAX} vectorization. Finally, the entire learning pipeline, including model evaluation, gradient computation, and optimization, is compiled offline using \emph{just-in-time} (JIT) compilation before the first data batch arrives. Since the computational graph remains unchanged, no recompilation is required during operation, allowing model updates to be performed with minimal overhead. These design choices ensure the computational efficiency required for real-time deployment.
%
% \subsection{Time-varying systems}
% The inclusion of a so-called forgetting factor in \eqref{eq:cost_fun} allows the handling of time-varying system dynamics by assigning lower weights to older data samples, thereby forcing the optimizer to focus on recent samples. This approach is commonly applied in conventional recursive identification methods \cite{ljung_system_1998}. \TBD{[I have already implemented it and tested with the current simulation example, but will have to construct an example where some parameters of \eqref{eq:example_data_gen_sys} change over time.]}
%
% \subsection{Consistency}

Another characteristic of the current formulation is that, under mild assumptions, the convergence and consistency properties of the SUBNET approach are inherited as $m\to\infty$, corresponding to the case where all data batches are retained. In practice, however, storing all previously collected data is generally infeasible, making these asymptotic guarantees unrealistic. The recursive estimation approach introduced in the next section addresses this limitation without requiring infinite data storage.

\section{Recursive parameter updates}\label{sec:recursive_learning}
\subsection{Recursive estimator}
While the batch-based optimization in Sect.~\ref{sec:continual_learning} is highly effective for moderate-sized buffers, it remains computationally demanding for online applications with low-latency requirements. In this section, we derive a recursive scheme that provides a computationally cheaper alternative by updating parameters in a single step upon the arrival of each batch. In contrast to the batch-wise learning approach, the proposed recursive formulation requires storing only the current batch, i.e., $m=1$. The memory is now only represented by the parameter estimates $\theta_i$. %Instead of taking the continual learning approach, the online identification setting can be embedded into the conventional RPEM framework.
With each incoming batch, parameters are updated using the recursive Gauss–Newton rule:
\begin{subequations}\label{eqs:Gauss_Newton_alg}
\begin{align}
    \theta_{i} &= \theta_{i-1} + \mu_i\, R_i^{-1}\, \psi_i^\top(\theta_{i-1})\,\epsilon_i(\theta_{i-1}),\label{eq:recursive_param_update}\\
    R_i &= (1 - \mu_i) R_{i-1} + \mu_i\, \psi_i^\top(\theta_{i-1})\psi_i(\theta_{i-1}),\label{eq:recursive_hessian_update}
\end{align}
\end{subequations}
where $\theta_i$ is the estimate of $\theta$ after batch $i$, $\mu_i$ is the step size (including a forgetting factor), $\psi_i(\theta)=\partial \hat{Y}_i/\partial\theta$ and $\epsilon_i(\theta)=Y_i-\hat{Y}_i(\theta)$ are, respectively, the gradient of the predictor and the prediction error associated with batch $\mathcal{B}_i$. Details about efficient computation of \eqref{eq:recursive_param_update} and selection of the step-size $\mu_i$ will be given in Sect.~\ref{sec:practical_guide}. Let the starting time of batch $\mathcal{B}_i$ be denoted by $k_{0,i}$. Then, the batched output and predictor vectors are
\begin{subequations}
\begin{align}
    Y_i &= [ y_{k_{0,i}}^\top\ y_{k_{0,i}+1}^\top\ \dots\ y_{k_{0,i}+N-1}^\top ]^\top,\\
    \hat Y_i(\theta) &= [h_\theta^\top(\hat x_{k_{0,i}}, u_{k_{0,i}})\, \dots\, h_\theta^\top(\hat{x}_{k_{0,i}+N-1}, u_{k_{0,i}+N-1})]^\top,\\
    \hat{x}_{k_{0,i}} &= \Psi_\theta(u_{k_{0,i}-n}^{k_{0,i}-1}, y_{k_{0,i}-n}^{k_{0,i}-1}),
\end{align}
\end{subequations}
where the initial state $\hat{x}_{k_{0,i}}$ is computed by the encoder network using past data, and the remaining states are obtained by forward propagation through $f_\theta$.

The algorithm \eqref{eqs:Gauss_Newton_alg} minimizes the running average criterion
\begin{equation}\label{eq:running_cost}
    V_{T}(\theta) = \frac{1}{T}\sum_{i=0}^{T-1}\|\epsilon_i(\theta)\|_2^2,
\end{equation}
where $T$ denotes the number of batches observed so far. Note that introducing a forgetting factor (see Sect.~\ref{sec:practical_guide}) replaces the uniform weighting of the batches in \eqref{eq:running_cost} with an exponentially decaying one. Similar to the batch-wise learning approach in Sect.~\ref{sec:continual_learning}, we assume that at least $n$ extra samples are available before the first batch arrives to enable state estimation using the encoder. Furthermore, the recursive formulation also supports having a batch size of $N=1$, which corresponds to a one-sample-based adaptation.% In the limit $I\to\infty$, under appropriate data stationarity assumptions, $V_I(\theta)$ converges to a deterministic mean criterion $\bar{V}_I(\theta)$, characterized below.

\subsection{Gradient computation}
The gradient $\psi_i \in\mathbb{R}^{Nn_\mathrm{y}\times n_\theta}$ is computed by propagating state sensitivities $s_k=\partial \hat{x}_k/\partial\theta \in\mathbb{R}^{\hat{n}_\mathrm{x}\times n_\theta}$ alongside the simulation, following~\cite{forgione_adaptation_2023}. Let the index set $\mathbb{I}_i = \{k_{0,i},\dots, k_{0,i}+N-1\}$ denote the time indices associated with batch $\mathcal{B}_i$. With the encoder providing the initial sensitivity, the $s_k$ values can be expressed as
\begin{subequations}
\begin{align}
    s_{k_{0,i}} &= \partial \Psi(u_{k_{0,i}-n}^{k_{0,i}-1}, y_{k_{0,i}-n}^{k_{0,i}-1}) / \partial\theta,\\
    s_{k+1} &= F_{\mathrm{x},k}\, s_k + F_{\theta,k},\quad k \in\mathbb{I}_i,
\end{align}
\end{subequations}
where $F_{\mathrm{x},k}\in\mathbb{R}^{\hat{n}_\mathrm{x}\times \hat{n}_\mathrm{x}}$ and $F_{\theta,k}\in\mathbb{R}^{\hat{n}_\mathrm{x}\times n_\theta}$ are the Jacobians of $f_\theta(\hat{x}_k, u_k)$ w.r.t. $\hat{x}_k$ and $\theta$, respectively. The rows of $\psi_i$ are then
\begin{equation}
    \frac{\partial \hat y_k}{\partial \theta} = H_{\mathrm{x},k}\, s_k + H_{\theta,k},\quad k \in\mathbb{I}_i,
\end{equation}
with $H_{\mathrm{x},k}\in\mathbb{R}^{n_\mathrm{y}\times \hat{n}_\mathrm{x}}$ and $H_{\theta,k}\in\mathbb{R}^{n_\mathrm{y}\times n_\theta}$ being the Jacobians of $h_\theta(\hat{x}_k, u_k)$ w.r.t. $\hat{x}_k$ and $\theta$, respectively. In practice, the required Jacobians are computed via automatic differentiation using the \texttt{JAX} library, similar to Sect.~\ref{sec:continual_learning}.

\subsection{Convergence analysis}
Recursion \eqref{eqs:Gauss_Newton_alg} is an instance of the general stochastic-approximation scheme analyzed in \cite{ljung_analysis_1981,ljung_theory_1987,ljung_system_1998}. We follow the structure of \cite[Ch. 4]{ljung_theory_1987} to establish that, under mild conditions on the data-generating system and the predictor, $\theta_i$ converges almost surely to a set of stationary points.
%
%Let us denote the SUBNET model structure \eqref{eq:ANN-SS} with the encoder function \eqref{eq:encoder_net} as $S_\theta$, parametrized by the parameter $\theta$ that is restricted to vary in a compact set $\theta\in\Theta\subset\mathbb{R}^{n_\theta}$. Then, the resulting model set is $\mathcal{S} = \{S_\theta \mid \theta\in\Theta\}$. 

Let us denote the batch predictor of the model structure \eqref{eq:ANN-SS}, associated with batch $\mathcal{B}_i$ as $\hat\gamma$, expressed as
\begin{equation}
    \hat Y_i(\theta) = \hat{\gamma}(\theta,\, u_{k_{0,i}-n}^{k_{0,i}+N-1}, y_{k_{0,i}-n}^{k_{0,i}-1}).
\end{equation}
Note that the predictor uses measured outputs only over the state initialization window $k\in [k_{0,i}-n, k_{0,i}-1]$. Over the simulation window $k\in [k_{0,i}, k_{0,i}+N-1]$, the predictor only uses the available input signals. We make the general assumption that $\hat{\gamma}$ is differentiable w.r.t. $\theta$ everywhere on $\Theta$. As parameter estimation is implemented using automatic differentiation in practice, this is only a technical condition. 

To prove the convergence of the recursive scheme \eqref{eqs:Gauss_Newton_alg}, we require the data-generating system to satisfy the following:
\begin{cond}\label{cond:stab_data_gen}
    For any $\delta>0$, there exist a $C(\delta)\in[0,\infty)$ and a $\lambda\in[0,1)$, such that
    \begin{equation}
        \mathbb{E}_e\{\|y_k - \tilde{y}_k\|_2^4\} < C(\delta) \lambda^{k-k_0},\quad \forall k\geq k_0
    \end{equation}
    for all $k_0\geq 0$ and all $x_{k_0},\tilde{x}_{k_0}\in\mathbb{R}^{n_\mathrm{x}}$ with $\|x_{k_0} - \tilde{x}_{k_0}\|_2 < \delta$, and $\{(u_k, e_k)\}_{k=k_0}^{\infty}\in\mathcal{W}_{[k_0,\infty]}$, where $\mathcal{W}_{[k_0,\infty]}$ denotes the $\sigma$-algebra associated with the random variables $\{(u_k, e_k)\}_{k=k_0}^{\infty}$; moreover, $y_k$ and $\tilde{y}_k$ satisfy \eqref{eq:data-gen-sys} with the same $(u_k, e_k)$ but with different initial conditions.
\end{cond}

Note that Condition~\ref{cond:stab_data_gen} is conventional for classical convergence proofs (and even consistency analysis) of parametric identification approaches. %However, there is one key difference: our approach requires a batch-wise stability condition, which is less restrictive than requiring asymptotic stability.
For our analysis, we also require that the applied model satisfies the following stability condition w.r.t. perturbations of the dataset:
\begin{cond}\label{cond:stab_pred}
    There exist $C\in[0,\infty)$, such that for all $\theta\in\Theta$ and $k_0\geq n$, the predictor $\hat{\gamma}$ satisfies
    \begin{multline}\label{eq:model_stab}
        \|\hat{\gamma}(\theta, u_{k_0-n}^{k_0+N-1}, y_{k_0-n}^{k_0-1}) - \hat{\gamma}(\theta, \tilde{u}_{k_0-n}^{k_0+N-1}, \tilde{y}_{k_0-n}^{k_0-1})\|_2 \\\leq C \sum_{\ell=k_0-n}^{k_0+N-1}\|u_\ell - \tilde{u}_\ell\|_2 + C\sum_{\ell=k_0-n}^{k_0-1} \|y_\ell - \tilde{y}_\ell\|_2,
    \end{multline}
    for any $(u_{k_0-n}^{k_0+N-1}, y_{k_0-n}^{k_0-1})$ and $(\tilde{u}_{k_0-n}^{k_0+N-1}, \tilde{y}_{k_0-n}^{k_0-1})$. Furthermore, $\|\hat{\gamma}(\theta, 0_{k_0-n}^{k_0+N-1}, 0_{k_0-n}^{k_0-1})\|\leq C$, and \eqref{eq:model_stab} is also satisfied by $\frac{\partial}{\partial\theta}\hat{\gamma}$.
\end{cond}

Because each batch has a fixed finite length $N$, Condition~\ref{cond:stab_pred} is met if the encoder $\Psi_\theta$, state transition $f_\theta$, and output map $h_\theta$ are Lipschitz continuous in their arguments uniformly over the compact set $\Theta$. Due to the applied ANN parametrization, this holds automatically, as compositions of affine maps and Lipschitz activations (e.g., tanh, ReLU) are inherently Lipschitz continuous.

% \begin{remark}[Exponentially stable predictor]
%     % Unlike conventional RPEM approaches that require the model structure to be globally incrementally stable, the proposed batch-wise formulation with encoder-based resets ensures that Condition 2 is satisfied for any model with Lipschitz continuous components over the fixed horizon $N$. This removes the need for restrictive stability constraints on the model during the learning process, with the downside that our convergence results are only valid for a finite horizon $N$. In the current setting, i.e., applying a fixed batch-length and letting the number of batches go to infinity, this is a justified condition.
%     \TBD{check these commented statements}
% \end{remark}

To prove the convergence of \eqref{eqs:Gauss_Newton_alg}, we first show that in the limit $I\to\infty$, the cost function $V_I(\theta)$ converges to a deterministic criterion, based on classical results~\cite{ljung_convergence_1978}.

\begin{theorem}\label{thm:cost_fun_convergence}
    Let Conditions~\ref{cond:stab_data_gen}--\ref{cond:stab_pred} hold with a quasi-stationary input process $u$ independent of the white-noise process $e$. Then,
    \begin{equation}\label{eq:cost_fun_convergence}
        \sup_{\theta\in\Theta} \vert V_T(\theta) - \bar{V}(\theta)\vert \to 0,
    \end{equation}
    with probability 1 as $T\to\infty$, where $\bar{V}(\theta) = \lim_{T\to\infty}\frac{1}{T}\sum_{i=0}^{T-1} \mathbb{E}\{\|\epsilon_i(\theta)\|_2^2\}$. Moreover, $\bar{V}$ is continuously differentiable, and the limits
    \begin{align}
        g(\theta)&:=\lim_{T\to\infty}\frac{1}{T} \sum_{i=0}^{T-1}\mathbb{E}\{\psi_i^\top(\theta)\epsilon_i(\theta)\},\label{eq:g_theta}\\
        \bar {R}(\theta) &:= \lim_{T\to\infty} \frac{1}{T}\sum_{i=0}^{T-1}\mathbb{E}\{\psi_i^\top(\theta)\psi_i(\theta)\},\label{eq:Rbar_theta}
    \end{align}
    exist and are continuous on $\Theta$, giving $\frac{\mathrm{d}}{\mathrm{d}\theta}\bar{V}(\theta)=-2g(\theta)$.
\end{theorem}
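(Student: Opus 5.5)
The plan is to reduce the statement to Ljung's uniform convergence result for prediction-error criteria \cite{ljung_convergence_1978}, by treating each batch as one "sample" of a vector-valued predictor. I would define the lifted signals $\mathbf{y}_i := Y_i$ and $\mathbf{z}_i := (u_{k_{0,i}-n}^{k_{0,i}+N-1}, y_{k_{0,i}-n}^{k_{0,i}-1})$. Then $\epsilon_i(\theta)=\mathbf{y}_i-\hat\gamma(\theta,\mathbf{z}_i)$, and $\hat\gamma$ depends only on a window of finite length $n+N$.

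\textbf{Step 1: the lifted data satisfy Ljung's stability condition.} Since the batches are consecutive and non-overlapping, $k_{0,i}=k_{0,0}+iN$. Condition~\ref{cond:stab_data_gen} gives exponential forgetting of the initial state in fourth moment at the sample level. Summing over the $n+N$ samples of a window, this yields exponential forgetting in the batch index with rate $\lambda^{N}$. Concretely, one compares the true data with data generated from a state "reset" at an earlier time $k_0$; this is the construction of the $\mathcal{W}$-measurable approximations used in \cite{ljung_convergence_1978}. Independence of $u$ and the white noise $e$, together with quasi-stationarity of $u$, makes the lifted process $\{(\mathbf{y}_i,\mathbf{z}_i)\}$ quasi-stationary as well.

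\textbf{Step 2: the lifted predictor satisfies Ljung's predictor condition.} Condition~\ref{cond:stab_pred} states that $\hat\gamma$ and $\partial\hat\gamma/\partial\theta$ are Lipschitz in the data, uniformly over $\Theta$, with a bounded value at zero. Because the window is finite, no exponential weighting is needed: the weighting sequence is zero beyond lag $n+N$. This is exactly the form of Ljung's predictor condition, so it holds trivially. Combined with the fourth-moment bound from Step 1, it gives uniformly bounded second moments of $\epsilon_i(\theta)$ and $\psi_i(\theta)$, as well as of their products.

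\textbf{Step 3: uniform convergence and the identities for $\bar V$.} With Steps 1 and 2, \cite[Lemma 3.1/Theorem 2.1]{ljung_convergence_1978} applies. Pointwise almost-sure convergence of $V_T(\theta)-\frac1T\sum_i\mathbb{E}\|\epsilon_i(\theta)\|^2\to0$ follows from a covariance-decay strong law of large numbers (Cramér–Leadbetter type), using the mixing provided by Step 1. Quasi-stationarity then ensures that the limit $\bar V$ exists. Uniformity over the compact set $\Theta$ follows from a finite $\delta$-net together with an equicontinuity bound. That bound comes from the mean value theorem: $|\,\|\epsilon_i(\theta)\|^2-\|\epsilon_i(\theta')\|^2|\le 2\sup_{\Theta}\|\epsilon_i\|\,\|\psi_i\|\,\|\theta-\theta'\|$, whose sample average is almost surely bounded by the same strong law. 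The same argument applied to the entries of $\psi_i^\top\epsilon_i$ and $\psi_i^\top\psi_i$ shows that $g$ and $\bar R$ exist. Their continuity follows from uniform convergence of continuous functions; this step requires that $\partial\hat\gamma/\partial\theta$ is continuous in $\theta$, which I would assume along with differentiability. Finally, I would differentiate $\mathbb{E}\|\epsilon_i(\theta)\|^2$ under the expectation, which is justified by dominated convergence using the moment bounds, to get $-2\mathbb{E}\{\psi_i^\top\epsilon_i\}$. Because the averaged derivatives converge uniformly to $-2g$, the limit can be exchanged with differentiation. This gives $\bar V\in C^1$ with $\bar V'=-2g$.

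\textbf{Main obstacle.} The delicate point is Step 1. The predictor uses measured noisy outputs $y$ in the encoder window, so batches $i$ and $j$ are correlated through the system state. The strong law requires showing that $\mathrm{Cov}(\|\epsilon_i\|^2,\|\epsilon_j\|^2)$ decays like $\lambda^{N|i-j|}$. My approach would be to replace $\epsilon_i$ by its counterpart computed from a system restarted at $k_{0,j}$ (made independent via the reset), bound the error with Condition~\ref{cond:stab_data_gen}, and use Cauchy–Schwarz with the fourth moments. This is where the exponent $4$ in Condition~\ref{cond:stab_data_gen} is needed.
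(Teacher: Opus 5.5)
Your proposal takes essentially the same route as the paper's proof. Both reduce the claim to \cite[Lemma 3.1]{ljung_convergence_1978} by noting that the batch predictor $\hat\gamma$ has finite memory, a trivial case of Ljung's fading-memory condition. Both then use Condition~\ref{cond:stab_data_gen} with quasi-stationarity to get geometrically summable covariances of the batch losses, and apply the same lemma to the gradient so that uniform convergence of the derivatives yields $\frac{\mathrm{d}}{\mathrm{d}\theta}\bar V=-2g$. Your lifting, strong-law, $\delta$-net and derivative-exchange steps only make explicit what the paper leaves to the citation.

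One detail to tighten in your decoupling step: the restart must come after the last sample that $\epsilon_j$ uses, i.e., at $k_{0,j}+N$ rather than $k_{0,j}$, and only for $i$ with $k_{0,i}-n\geq k_{0,j}+N$. The resulting independence also requires the input itself to admit Ljung-type $\mathcal{W}$-measurable approximations. Quasi-stationarity and independence from $e$ alone do not give this, and the paper likewise leaves the point implicit.
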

\begin{proof}
    The batch loss $\ell_i(\theta)=\|\epsilon_i(\theta)\|_2^2$ satisfies Condition C1 in \cite{ljung_convergence_1978} and Condition~\ref{cond:stab_pred} makes $\ell_i$ uniformly Lipschitz in $\theta$ on the compact set $\Theta$. Since the encoder network and the batch formulation ensure that the prediction $\hat{Y}_i$ depends on only a finite window of past IO data, $\hat{\gamma}$ has finite memory, which is a special case of the fading-memory assumption in \cite{ljung_convergence_1978}. Together with the quasi-stationarity of the data and Condition \ref{cond:stab_data_gen}, the loss sequence is quasi-stationary with geometrically summable covariances. Consequently, the proof of \cite[Lemma 3.1]{ljung_convergence_1978} applies, giving \eqref{eq:cost_fun_convergence}. Under Condition~\ref{cond:stab_pred}, \cite[Lemma 3.1]{ljung_convergence_1978} also applies to $\frac{\mathrm{d}}{\mathrm{d}\theta}\ell_i$, yielding \eqref{eq:g_theta} and \eqref{eq:Rbar_theta}, which are both continuous on $\Theta$. Uniform convergence of the gradients together with convergence of $V_T$ then gives $\frac{\mathrm{d}}{\mathrm{d}\theta}\bar{V}(\theta) = -2g(\theta)$.
\end{proof}

Theorem~\ref{thm:cost_fun_convergence} implies that the (local) minima of $V_{T}(\theta)$ and the limit function $\bar{V}(\theta)$ will be arbitrarily close as the number of batches increases. We now state the convergence result for the parameter estimates $\theta_i$.

\begin{theorem}\label{thm:convergence}
Consider the recursion \eqref{eqs:Gauss_Newton_alg} applied to data generated by \eqref{eq:data-gen-sys}. Assume that the conditions of Theorem~\ref{thm:cost_fun_convergence} hold; together with
\begin{enumerate}
    %\item Assumption~\ref{assum:cost_fun_convergence} and Conditions \ref{cond:stab_data_gen}, \ref{cond:stab_pred} hold;
    %\item The data process $\{(u_k, e_k)\}$ is asymptotically mean stationary, with $\{e_k\}$ i.i.d., zero-mean, finite variance, and independent of $\{u_k\}$;
    \item [(a)] step size $\mu_i>0$ satisfies $i\mu_i \to \bar{\mu}\in\mathbb{R}_+$ as $i\to\infty$;
    \item [(b)] $R_i\succ 0$ for all $i$;
    \item [(c)] \eqref{eq:recursive_param_update} includes a projection keeping $\theta_i\in\Theta$.
    \end{enumerate}
    Then, either $\lim_{i\to\infty}\theta_i \in \{\theta \mid\frac{\mathrm{d}}{\mathrm{d}\theta}\bar V(\theta) = 0\}$, 
    % \begin{equation}\label{eq:lim_theta}
    %     \lim_{i\to\infty}\theta_i \in \Big\{\theta \Big|\ \tfrac{\mathrm{d}}{\mathrm{d}\theta}\bar V(\theta) = 0\Big\},
    % \end{equation}
    or $\theta_i$ converges to the boundary of $\Theta$, with probability 1.
\end{theorem}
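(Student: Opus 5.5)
The plan is to apply Ljung's ODE method for stochastic approximation \cite{ljung_analysis_1981}, \cite[Ch.~4]{ljung_theory_1987}. Recursion \eqref{eqs:Gauss_Newton_alg} is cast in the standard form $z_i = z_{i-1} + \mu_i Q(z_{i-1}, \varphi_i)$. Here $z=(\theta,\mathrm{vec}\,R)$, the observation $\varphi_i$ collects the data window $(u_{k_{0,i}-n}^{k_{0,i}+N-1}, y_{k_{0,i}-n}^{k_{0,i}+N-1})$, and
\[
Q(z,\varphi_i) = \bigl(R^{-1}\psi_i^\top(\theta)\epsilon_i(\theta),\ \psi_i^\top(\theta)\psi_i(\theta) - R\bigr).
\]
The key simplification relative to the classical RPEM analysis is that $\psi_i(\theta)$ and $\epsilon_i(\theta)$ are evaluated directly at $\theta_{i-1}$ from a finite data window. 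The encoder resets the state each batch, so no recursively propagated predictor states depend on past parameter estimates. The "linear filter" part of Ljung's framework is therefore trivial, with memory of exactly one batch.

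First, I would verify Ljung's regularity conditions on $Q$. By Condition~\ref{cond:stab_pred}, both $\hat\gamma$ and $\partial\hat\gamma/\partial\theta$ are affinely bounded and Lipschitz in the data, uniformly over the compact set $\Theta$. Together with the fourth-moment bound of Condition~\ref{cond:stab_data_gen}, this gives $\|Q(z,\varphi)\|\le K(1+\|\varphi\|^2)$ and a local Lipschitz property in $(z,\varphi)$ with constant $K(1+\|\varphi\|^2)$. These estimates hold on any region where $R\succeq\delta I$, and (b) with the averaging structure of \eqref{eq:recursive_hessian_update} is used to keep $R_i$ in such a region. Condition~\ref{cond:stab_data_gen} also supplies the exponential forgetting of the data with respect to initial conditions that Ljung's mixing-type assumptions require.

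Second, I would establish the averaged vector field. By Theorem~\ref{thm:cost_fun_convergence}, for each fixed $z$ the limit $\bar f(z)=\lim_T \frac1T\sum_{i<T}\mathbb{E}\,Q(z,\varphi_i)$ exists and equals $\bigl(R^{-1}g(\theta),\ \bar R(\theta)-R\bigr)$, and it is continuous. Condition (a) gives $\mu_i\sim\bar\mu/i$, so $\sum\mu_i=\infty$ and $\sum\mu_i^p<\infty$ for $p>1$. Ljung's theorem then states that, with probability one, $z_i$ converges to an invariant set of the ODE
\[
\dot\theta = R^{-1}g(\theta), \qquad \dot R = \bar R(\theta)-R,
\]
provided the iterates return infinitely often to a compact subset of the domain of attraction. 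Condition (c) supplies this boundedness, with the projection onto $\Theta$ handled as in \cite[Sect.~4.?]{ljung_theory_1987}, and it is the source of the alternative conclusion that the iterates reach the boundary.

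Third, I would identify the invariant set with a Lyapunov function. Take $W(\theta,R)=\bar V(\theta)$. By Theorem~\ref{thm:cost_fun_convergence}, along interior trajectories
\[
\dot W = -2g^\top(\theta)R^{-1}g(\theta) \le 0,
\]
with equality iff $g(\theta)=0$, i.e.\ $\tfrac{\mathrm d}{\mathrm d\theta}\bar V(\theta)=0$. LaSalle's principle then places the limit set in $\{\theta\mid\tfrac{\mathrm d}{\mathrm d\theta}\bar V=0\}$, with $R=\bar R(\theta)$. Otherwise trajectories are stopped by the projection on $\partial\Theta$, which gives the stated dichotomy.

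I expect the main obstacle to be the first step, namely the uniform positive definiteness of $R_i$ needed for $R^{-1}$ to be Lipschitz. Condition (b) only asserts $R_i\succ0$ pointwise, without a uniform lower bound. My first attempt would be to argue on the ODE side that $R(t)\to\bar R(\theta(t))$, adding a mild regularization $\bar R\succeq\delta I$ (or $R_i+\delta I$) if necessary. Alternatively, I would invoke the standard device in \cite{ljung_theory_1987} of restricting $D$ to $\{R\succeq\delta I\}$ and treating exits as part of the boundary case. A second delicate point is justifying that Ljung's mixing conditions follow from Condition~\ref{cond:stab_data_gen} when the batches overlap through the encoder window.
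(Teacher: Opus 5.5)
Your proposal is correct and takes the same route as the paper. The paper's proof simply invokes \cite[Theorem 2]{ljung_analysis_1981}, tracks the associated ODE $\dot\theta = R^{-1}g(\theta)$, $\dot R = \bar R(\theta) - R$ on the time scale $\tau_i=\sum_{\ell\le i}\mu_\ell$, and uses $\bar V$ as a Lyapunov function with $\frac{\mathrm{d}}{\mathrm{d}\tau}\bar V = -2g^\top R^{-1}g\le 0$, with the projection in (c) supplying the boundary alternative; your extra checks (growth and Lipschitz bounds on $Q$, mixing from Condition~\ref{cond:stab_data_gen}, and a uniform lower bound on $R_i$, which the paper also leaves implicit and handles in practice via $R_i+\delta I_{n_\theta}$) add detail to that same argument rather than change it.
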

\begin{proof}
    Under the considered conditions, the proof of \cite[Theorem 2]{ljung_analysis_1981} applies (see also \cite[Ch. 4]{ljung_theory_1987}). The proof uses the ODE associated with recursion \eqref{eqs:Gauss_Newton_alg}. Changing the time-scale from $i$ to $\tau$ such that $\theta_D(\tau_i) \leftrightarrow \theta_i$, $\tau_i=\sum_{\ell=0}^i\mu_\ell$, the iterates $(\theta_i, R_i)$ track the solution of $\frac{\mathrm{d}}{\mathrm{d}\tau}\theta_D(\tau)=R_D^{-1}(\tau)g(\theta_D(\tau))$ with $\frac{\mathrm{d}}{\mathrm{d}\tau}R_D(\tau)=\bar{R}(\theta_D(\tau))-R_D(\tau)$. Then, based on \cite[Theorem 2]{ljung_analysis_1981}, $\bar{V}$ is a Lyapunov function for this ODE system with $\frac{\mathrm{d}}{\mathrm{d}\tau}\bar{V}(\theta_D(\tau))=-2g^\top(\theta_D(\tau)) R_D^{-1}(\tau)(\theta_D)g(\theta_D(\tau))\leq 0$. As $R_D^{-1}(\tau)\succ 0$, convergence of the recursive algorithm \eqref{eqs:Gauss_Newton_alg} follows.
\end{proof}

\subsection{Practical guide}\label{sec:practical_guide}
Theorem~\ref{thm:convergence} shows that \eqref{eqs:Gauss_Newton_alg} is convergent under the specified conditions. Thus, we have constructed a recursive identification scheme for ANN-SS models with encoder-based state initialization that can also be straightforwardly extended to other approaches in the SUBNET model family.

There are, however, a few conditions of Theorem~\ref{thm:convergence} that require further discussion. First, the step size $\mu_i$ needs to satisfy (a). For time-invariant systems, the optimal choice is $\mu_i=1/i$ as it provides parameter estimates with minimal achievable variance~\cite{ljung_theory_1987}. However, at the beginning of the recursion, the gradient computations might be rather inaccurate due to the (possible) inaccuracy of the initial parameter estimates. This necessitates introducing a forgetting factor $\lambda_i$, then, a possible formulation is to use
\begin{equation}\label{eq:time_inv_mu}
    \mu_i = \mu_{i-1} / (\mu_{i-1} + \lambda_i),
\end{equation}
where $\mu_0$ is a user-specific constant that influences transient adaptation. As the forgetting factor $\lambda_i$ is already applied, $\mu_0=1$ is suggested. To satisfy (a) in Theorem~\ref{thm:convergence} and recover the optimal asymptotic law of $\mu_i\sim 1/i$, the forgetting factor should converge to 1 as $i\to\infty$. Hence, a suggested choice is $\lambda_i = \bar{\lambda} \lambda_{i-1} + (1 - \bar{\lambda})$, where $\lambda_0$ and $\bar{\lambda}$ are design parameters, with recommended values of $\bar{\lambda}=0.99$ and $\lambda_0\in [0.75,\, 0.95]$. For time-varying systems, a common choice is to select a constant forgetting factor $\lambda_i\equiv \bar{\lambda}<1$, and, instead of \eqref{eq:time_inv_mu}, apply a step-size, as
\begin{equation}\label{eq:exponential_forgetting_factor}
    \mu_i = (1 - \bar{\lambda})/(1 - \bar{\lambda}^i).
\end{equation}
With $\bar{\lambda}$ close to 1, \eqref{eq:exponential_forgetting_factor} ensures that ''old'' batches contribute marginally to the parameter updates, which is practical for fast-changing dynamics. For a detailed discussion about the step-size selection, see \cite[Ch. 5.6]{ljung_theory_1987}.

Another important aspect of Theorem~\ref{thm:convergence} is condition (b), which requires $R_i$ to be positive definite %, more specifically, $R_i\geq \delta I_{n_\theta}$ for some $\delta\in\mathbb{R}_{>0}$ and
for all $i\geq0$. This is required both for the convergence proof and because the inverse $R_i^{-1}$ needs to be computed at each step. However, due to the typical over-parametrization of ANN-based models, %the columns of $\psi_i$ may be linearly dependent, causing
$R_i$ can be ill-posed or singular. Hence, in practice, $R_i^{-1}$ is computed by $R_i^{-1}\approx (R_i+\delta I_{n_\theta})^{-1}$ with a small positive regularization constant, e.g., $\delta=10^{-3}$. Naive computation of this matrix inverse is computationally expensive. Since $R_i$ is symmetric and the regularization guarantees that it is positive definite, we propose to directly compute the product $R_i^{-1}\psi_i^\top\epsilon_i$ via Cholesky factorization, analogously to common implementations of Gaussian process inference~\cite{rasmussen_gaussian_2005}. Hence, the parameter update in \eqref{eq:recursive_param_update} simplifies to
\begin{subequations}
\begin{align}
    \theta_i &= \theta_{i-1} + \mu_i L_i^\top\backslash(L_i\backslash \psi_i^\top(\theta_{i-1})\epsilon_i^\top(\theta_{i-1})),\\
    L_i &= \mathrm{chol}(R_i + \delta I_{n_\theta}),
\end{align}
\end{subequations}
where $A\backslash b$ denotes the vector $z$ which solves $Az = b$. Since $R_i$ intuitively represents the covariance, i.e., uncertainty, of the estimated parameters, it is advised to be initialized as an identity matrix multiplied by a large constant, e.g., $R_0=10^4I_{n_\theta}$.

Finally, we note that convergence to the boundary of $\Theta$ is only technical. %, as we have not assumed anything regarding the projection method in condition (c).
With particular projection schemes, convergence to local minima of $\bar{V}(\theta)$ can be proven~\cite{ljung_analysis_1981}.
%
% \TBD{We should explain here each condition of the theorem}
% \begin{remark}
%     By applying so-called stable-by-design parametrizations, Condition~\ref{cond:stab_pred} is guaranteed to hold for any $\theta\in\Theta$. Then, Theorem~\ref{thm:convergence} ensures \eqref{eq:lim_theta} to hold with probability 1, without the need for a projection to keep $\theta_i\in\breve{\Theta}$.
% \end{remark}

\section{Simulation example}\label{sec:sim_study}
To demonstrate the capabilities of the proposed online learning methods, we identify the nonlinear lateral dynamics of a ground vehicle\footnote{Implementation: https://github.com/AIMotionLab-SZTAKI/online-sysid}, expressed as
\begin{subequations}\label{eq:example_data_gen_sys}
\begin{align}
    \dot{v}_\mathrm{y} &= (F_\mathrm{y,f}\cos\delta + F_\mathrm{y,r})/m - v_\mathrm{x}\omega,\\
    \dot{\omega} &= (l_\mathrm{f}F_\mathrm{y,f}\cos\delta - l_\mathrm{r}F_\mathrm{y,r})/I_\mathrm{z},
\end{align}
\end{subequations}
where $v_\mathrm{x}$ is the longitudinal, $v_\mathrm{y}$ is the lateral velocity, $\omega$ is the yaw rate, $\delta$ is the steering angle, and the parameters are given in Table~\ref{tab:phys_params}. The lateral tire forces $F_\mathrm{y,f}$ and $F_\mathrm{y,r}$ are expressed by the Pacejka Magic Formula~\cite[Ch. 4.3]{pacejka_tyre_2012}, as
\begin{multline}
    F_{\mathrm{y},p} = \frac{mgl_p}{l_\mathrm{r} + l_\mathrm{f}} D_p \sin(C_p \arctan(B_p\alpha_p) - \\E_p(B_p\alpha_p - \arctan(B_p\alpha_p))),\quad p\in\{\mathrm{r},\mathrm{f}\},
\end{multline}
with $B_p$, \dots, $E_p$ parameters given in Table~\ref{tab:pacejka}, denoting the Pacejka coefficients, and $\alpha_\mathrm{f}$, $\alpha_\mathrm{r}$ being the slip angles:
\begin{equation}
    \alpha_\mathrm{f} = \delta - \arctan(\frac{v_\mathrm{y}+l_\mathrm{f}\omega}{v_\mathrm{x}}),\quad \alpha_\mathrm{r} = -\arctan(\frac{v_\mathrm{y}-l_\mathrm{r}\omega}{v_\mathrm{x}}).
\end{equation}
Model \eqref{eq:example_data_gen_sys} is discretized with the RK4 scheme under a zero-order hold assumption and a sampling time of $\Delta t = 0.01~\mathrm{s}$. We consider a constant longitudinal velocity throughout the simulations. The state is then $x_k = [v_{\mathrm{y},k}\ \omega_k]^\top\in\mathbb{R}^2$, the control input is $u_k=\delta_k\in\mathbb{R}$, and the measured output is $y_k=\omega_k+e_k$ with $e_k$ being an i.i.d. white noise process representing measurement noise and resulting in 30 dB \emph{signal-to-noise ratio}~(SNR) level. Data is generated by using a multisine input signal with 6 frequency components logarithmically spaced in the range $[0.1,\, 2]$ Hz, with Schroeder phases and an overall amplitude of 0.07 rad.
\begin{table}
  \centering
  \begin{minipage}[c]{0.52\linewidth}
    \centering
    \caption{Physical parameters.}
    \label{tab:phys_params}
    \begin{tabular}{lc}
    \hline
        Parameter & Value\\
        \hline
        Mass $m$ & 1500 kg\\
        Yaw inertia $I_\mathrm{z}$ & $2250~\mathrm{kgm^2}$\\
        Front axle dist. $l_\mathrm{f}$ & 1.2 m\\
        Rear axle dist. $l_\mathrm{r}$ & 1.4 m\\
        Gravity $g$ & 9.81 N/kg\\
        Long. speed $v_\mathrm{x}$ & 20 m/s\\
        \hline
    \end{tabular}
  \end{minipage}\hfill
  \begin{minipage}[c]{0.46\linewidth}
    \centering
    \caption{Pacejka tire parameters.}
    \label{tab:pacejka}
    \begin{tabular}{lcc}
      \hline
       & Front & Rear \\
      \hline
      $B$ & $10.0$ & $11.0$ \\
      $C$ & $1.9$  & $1.9$  \\
      $D$ & $1.0$  & $1.0$  \\
      $E$ & $0.97$ & $0.97$ \\
      \hline
    \end{tabular}
  \end{minipage}\vspace{-12pt}
\end{table}
%
% \begin{table}
%     \centering
%     \caption{Physical parameters of the example system.}\vspace{-6pt}
%     \label{tab:phys_params}
%     \begin{tabular}{cccccc}
%     \hline
%         $m$ & $I_\mathrm{z}$ & $l_\mathrm{f}$ & $l_\mathrm{r}$ & $g$ & $v_\mathrm{x}$\\
%         1500 kg & 2250 $\mathrm{kgm}^2$ & 1.2 m & 1.4 m & 9.81 N/kg & 20 m/s\\
%     \hline
%     \end{tabular}
% \end{table}
% %
% \begin{table}
%     \centering
%     \caption{Pacejka tire parameters.}\vspace{-6pt}
%     \label{tab:pacejka}
%     \begin{tabular}{l cccc}
%         \hline
%         & $B$ & $C$ & $D$ $E$\\
%         \hline
%         Front & 10 & 1.9 & 1 & 0.97\\
%         Rear & 11 & 1.9 & 1 & 0.97\\
%         \hline
%     \end{tabular}
% \end{table}

A batch size of $N=25$ and memory length $m=20$ are used. Training continues until $3\cdot 10^4$ samples (5 minutes of simulation time). The model order is set to $\hat{n}_\mathrm{x}=2$ with $f_\theta$ parametrized as a feedforward ANN with a linear bypass, 2 hidden layers with 4 nodes, using the \emph{hyperbolic tangent} (tanh) activation function. The encoder is implemented as a feedforward ANN with 1 hidden layer of 16 nodes tanh units. The output map $h_\theta$ is chosen to match the true output structure, i.e., $\hat{y}_k = [1\,0]\hat{x}_k$. To illustrate convergence, model parameters are saved after each training step, and each model is evaluated on a separate test dataset of length $N_\mathrm{test}=600$, generated using the same input distribution. %Code for the example is available on GitHub

First, we demonstrate the efficiency of the batch-wise learning approach outlined in Sect.~\ref{sec:continual_learning}. To showcase the importance of the L-BFGS optimizer, we first use only 20 Adam iterations per training phase, then repeat the experiment by appending 2 L-BFGS optimization steps to each training phase. Fig.~\ref{fig:example2_convergence} shows the convergence of the methods, while Fig.~\ref{fig:example2_times} shows the wall time\footnote{Achieved with an Intel Core i7-13700HX CPU and 16 GB of RAM.} (CPU time) that was required for each training phase and the resulting test errors after each model update. As visible, including the extra L-BFGS steps increases the computational load; however, it results in rapid convergence compared to when only Adam is used. %This result highlights that a careful combination of the encoder-based ANN-SS model structure and the recently proposed system identification pipeline of~\cite{bemporad_l-bfgs-b_2025} yields a highly efficient setting for online model learning.
% For the second example, when only $x_{1,k}$ is measured, we apply the same parametrization for $f_\theta$ as before, and construct $h_\theta$ to have a similar structure as the true output map, i.e., $\hat{y}_k = [1\,0]\hat{x}_k$. The encoder network is also paramatrized as a feedforward ANN with 2 hidden layers and 8 nodes, using the tanh activation function. Moreover, an encoder lag of $n=4$ is chosen. 

\begin{figure}
    \centering
    \includegraphics{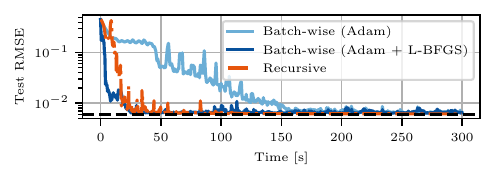}\vspace{-12pt}
    \caption{Test errors after each training phase with three online learning methods. The dashed black line corresponds to the noise floor.}\vspace{-12pt}
    \label{fig:example2_convergence}
\end{figure}
\begin{figure}
    \centering
    \includegraphics{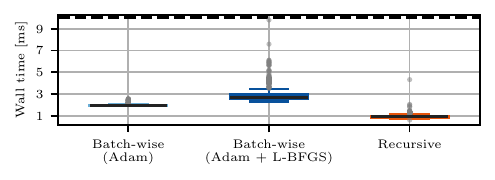}\vspace{-12pt}
    \caption{Average wall time of each training phase with three online learning methods. The dashed black line corresponds to the sampling time.}\vspace{-12pt}
    \label{fig:example2_times}
\end{figure}

Next, we investigate the recursive learning approach outlined in Sect.~\ref{sec:recursive_learning}. %To warm-start the recursive scheme, the previously applied Adam- and L-BFGS-based optimizations are used for the first 10 batches (corresponding to 2.5 seconds in simulation time), after which the recursive scheme is used.
The step size is selected according to \eqref{eq:time_inv_mu} with $\bar{\lambda}=0.99$, $\lambda_0=0.75$, $\mu_0=1$. For the matrix inverse computation, $\delta=10^{-3}$ is applied, and $R_i$ is initialized with $R_0=10^{4} I_{n_\theta}$. As shown in Fig.~\ref{fig:example2_convergence}, the convergence speed is somewhat slower than with the batch-wise learning approach; however, this is expected due to taking fewer iterations in each training phase and applying only a single batch for model updates. On the other hand, as Fig.~\ref{fig:example2_times} shows, the computations take only about 1 ms per iteration. %To showcase the ability of the approach to adapt the model parameters for an online change of the system dynamics, we suddenly switch the parameters of the data-generating systems to $a=0.11$, $b=0.45$, $c=0.04$ after batch number 500. The recursive calculations are nearly instantaneous, taking less than 1 ms per iteration. Model convergence is illustrated in Fig.~\ref{fig:recursive_convergence}. After the warm start, the convergence speed is naturally slower than with the continual learning approach, which is expected due to taking fewer iterations in each training phase. %However, the main advantage of the recursive scheme shows when the system parameters change abruptly after $i=500$. The method is able to adapt the model parameters to compensate for this added inaccuracy, while providing the model updates nearly instantaneously after each batch arrives.

Finally, we present a sensitivity analysis of the proposed learning methods with respect to the batch length $N$ and, in the case of the batch-wise approach, the memory $m$. To compactly characterize convergence speed and accuracy, we adopt the \emph{area under the convergence curve} (AUCC) metric, obtained by numerically integrating the convergence curves of Fig.~\ref{fig:example2_convergence}. Lower AUCC values thus indicate faster and more accurate convergence. The results are summarized in Table~\ref{tab:method_comparison}. Three key observations emerge. First, the computational cost of the batch-wise method scales with the effective replay-buffer length $m\cdot N$, %ranging from 0.4 to 17.8~ms per update,
whereas the recursive scheme remains substantially cheaper and nearly constant across all settings. Second, choosing the right batch length is crucial for the batch-wise method: training on short horizons yields inaccurate (and in some cases unstable) models, as seen at $N=5$, where both $m=1$ and $m=100$ scenarios lead to divergence (denoted by $\infty$ AUCC). For sufficiently long horizons, however, the batch-wise method attains the lowest AUCC values overall, outperforming the recursive scheme. Third, the recursive scheme shows robust convergence across all batch lengths. The increase in AUCC as $N$ increases is only due to convergence being measured against the physical time axis; hence, having longer batches inherently increases convergence time. The results highlight a trade-off: the recursive scheme offers low computational cost and reliable convergence, while the batch-wise method can achieve superior convergence at the expense of higher computational demand and sensitivity to hyperparameters.

\begin{table}
\centering
\caption{Avg. computation time and AUCC with varying batch settings.}
\label{tab:method_comparison}
\setlength{\aboverulesep}{0pt}
\setlength{\belowrulesep}{0pt}
\setlength{\cmidrulesep}{0pt}
\begin{tabular}{cc cc cc}
\hline
& \multicolumn{3}{c}{Batch-wise} & \multicolumn{2}{c}{Recursive} \\
\cmidrule(lr){2-4} \cmidrule(lr){5-6}
$N$ & $m$ & Time [ms] & AUCC & Time [ms] & AUCC \\
\hline
\multirow{3}{*}{5}
  & 1 & 0.4 & $\infty$ & \multirow{3}{*}{0.6} & \multirow{3}{*}{3.45} \\
  & 10 & 0.8 & 17.47 & & \\
  & 100 & 2.5 & $\infty$ & & \\
\hline
\multirow{3}{*}{25}
  & 1 & 0.8 & $\infty$ & \multirow{3}{*}{0.9} & \multirow{3}{*}{5.16} \\
  & 10 & 2.0 & 3.16 & & \\
  & 100 & 9.3 & 2.44 & & \\
\hline
\multirow{3}{*}{50}
  & 1 & 1.4 & 46.04 & \multirow{3}{*}{1.4} & \multirow{3}{*}{7.66} \\
  & 10 & 3.7 & 4.62 & & \\
  & 100 & 17.8 & 3.37 & & \\
\hline
\end{tabular}\vspace{-12pt}
\end{table}

\section{Conclusion}\label{sec:conclusions}
This paper investigated the online learning of system dynamics with ANN-SS models, applying an encoder network for initial state estimation. We formulated and analyzed a recursive identification approach extending conventional results from the system identification literature. We also adapted a recently proposed, computationally efficient identification pipeline in a batch-wise learning setting. Finally, we demonstrated the efficiency of the proposed methods in a comprehensive simulation study, proving that the presented approaches can be successfully employed in practice.

\addtolength{\textheight}{-1cm}   % This command serves to balance the column lengths
                                  % on the last page of the document manually. It shortens
                                  % the textheight of the last page by a suitable amount.
                                  % This command does not take effect until the next page
                                  % so it should come on the page before the last. Make
                                  % sure that you do not shorten the textheight too much.

%%%%%%%%%%%%%%%%%%%%%%%%%%%%%%%%%%%%%%%%%%%%%%%%%%%%%%%%%%%%%%%%%%%%%%%%%%%%%%%%

%%%%%%%%%%%%%%%%%%%%%%%%%%%%%%%%%%%%%%%%%%%%%%%%%%%%%%%%%%%%%%%%%%%%%%%%%%%%%%%%

%%%%%%%%%%%%%%%%%%%%%%%%%%%%%%%%%%%%%%%%%%%%%%%%%%%%%%%%%%%%%%%%%%%%%%%%%%%%%%%%
% \section*{APPENDIX}

% Appendixes should appear before the acknowledgment.

% \section*{ACKNOWLEDGMENT}

% The preferred spelling of the word ÒacknowledgmentÓ in America is without an ÒeÓ after the ÒgÓ. Avoid the stilted expression, ÒOne of us (R. B. G.) thanks . . .Ó  Instead, try ÒR. B. G. thanksÓ. Put sponsor acknowledgments in the unnumbered footnote on the first page.

\bibliography{IEEEabrv,literature}

%%%%%%%%%%%%%%%%%%%%%%%%%%%%%%%%%%%%%%%%%%%%%%%%%%%%%%%%%%%%%%%%%%%%%%%%%%%%%%%%

\end{document}